\newif\ifFull
\newcommand{\presence}{\Psi}
\newcommand{\activity}{\Phi}
\newcommand{\conflict}{C}
\newcommand{\R}{\mathbb{R}}
\newcommand{\GeneralMaxTotal}{\textsc{General\-Max\-Total}\xspace}
\newcommand{\kRestrictedMaxTotal}{$k$-\textsc{Re\-stric\-ted\-Max\-Total}\xspace}
\newcommand{\GreedySearch}{\textsc{Greedy}\xspace}
\newcommand{\LocalSearch}{\textsc{Phased\-Local\-Search}\xspace}
\newcommand{\IntervalSearch}{\textsc{IntGraph}\xspace}
\newcommand{\ILP}{\textsc{Ilp}\xspace}
\newcommand{\LocalSearchShort}{\textsc{PLS}}
\newtheorem{problem}{Problem}
\newtheorem{lemma}{Lemma}
\begin{document}

\title{Temporal Map Labeling:\\ A New Unified Framework with Experiments}
\date{}

\author{
Lukas Barth\thanks{Karlsruhe Institute of Technology, Karlsruhe, Germany}      
\and
Benjamin Niedermann\footnotemark[1]      
\and
Martin N\"ollenburg\thanks{Algorithms and Complexity Group, TU Wien, Vienna, Austria}      
\and
Darren Strash\footnotemark[1]      
}

\maketitle
\begin{abstract}
  The increased availability of interactive maps on the Internet
  and on personal mobile devices has created new challenges in
  computational cartography and, in particular, for label placement in
  maps. Operations like rotation, zoom, and translation dynamically
  change the map over time and make a consistent adaptation of the
  map labeling necessary.

  In this paper, we consider map labeling for the case that a
  map undergoes a sequence of operations over a specified time span. We unify and
  generalize several preceding models for dynamic map labeling into one
  versatile and flexible model. In contrast to previous research, we
  completely abstract from the particular operations (e.g., zoom,
  rotation, etc.) and express the labeling problem as a set of time
  intervals representing the labels' presences, activities, and
  conflicts.  The model's strength is manifested in its simplicity and
  broad range of applications. In particular, it supports label
  selection both for map features with fixed position as well as for
  moving entities (e.g., for tracking vehicles in logistics or air
  traffic control).

  Through extensive experiments on OpenStreetMap data, we evaluate our
  model using algorithms of varying complexity as a case study for
  navigation systems. Our experiments show that even simple (and thus,
  fast) algorithms achieve near-optimal solutions in our model with
  respect to an intuitive objective function.
\end{abstract}

\newcommand{\Change}[1]{\textcolor{blue}{#1}}

\section{Introduction}

Dynamic digital maps are becoming more and more ubiquitous, especially with the rising numbers of location-based services and smartphone users worldwide.
Consumer applications that include personalized and interactive map views range from classic navigation systems to map-based search engines and social networking services.
Likewise, interactive digital maps are a core component of professional geographic information systems.
All these map services have in common that the content of the map view is changing over time based on interaction with the system (i.e., zooming, panning, rotating, content filtering, etc.) or the physical movement of the user or a set of tracked entities.
A key ingredient of every (paper or digital) map are features like geographic places, points of interest, or search results that all need to be labeled by a name or a graphical symbol in order to become meaningful for the map user. While we focus on point features in our experiments, our results can be generalized to other map features, such as line or area features.

For static maps, \emph{labeling}---the selection and placement of labels---is a well-studied research area in cartography and computational geometry. One of the primary quality constraints for labeling, is that no two labels may overlap each other~\cite{fw-ppwalm-91,i-pnm-75}.
Most formalizations of the static map labeling problem are NP-hard and, therefore, a variety of approximation algorithms and heuristics have been proposed in the literature. See~\cite{kb-alppcilp-08} for an overview.

More recently, dynamic map labeling has captured the interest of researchers, leading to the study of labeling problems in maps that support certain subsets of operations like zooming, panning, and rotations.
The main difficulty in dynamic maps is that the selection and placement of labels must be temporally coherent (or \emph{consistent}) during all map animations resulting from interactions, rather than being optimized individually for each map view as in static map labeling.
A map with temporally coherent labeling avoids visually distracting effects like jumping or flickering labels~\cite{Been2006}.
Again, consistent dynamic map labeling problems are typically NP-hard and approximation results as well as heuristics are known~\cite{Been2006,Been2010,gnr-clrm-16,gnr-elsrm-16,Liao2014}.
However, most of the existing algorithmic results in dynamic map labeling take a \emph{global} view on the map, which optimizes over the \emph{whole} interaction space, regardless of which portion of that space is actually explored by the user.

In this paper we take a more local view on dynamic map labeling.
Our aim is to develop algorithms that optimize the labeling for a specific map animation given offline as an input.
Any feature or label that is not relevant for that particular animation---for example, because it never enters the map view---can be ignored by our algorithms.
This approach not only allows us to compute better labelings by removing unnecessary dependencies and non-local effects, but it also reduces the problem size, since fewer features and labels must be taken into account.

We first formulate an abstract, generic framework for offline,
temporal labeling problems, in which labels and potential
conflicts between labels are represented as intervals over time.
To represent label's presence, we use a \emph{presence} interval,
which corresponds to the time that a label is present (but not necessarily displayed) in the
map view. That is, whenever a label enters the map view, a
corresponding presence interval starts, and whenever a label
leaves the view, its current
presence interval ends. Next, a \emph{conflict interval}
(or simply \emph{conflict}) between two present
labels starts and ends at the points in time at which the two labels
start and stop intersecting.  A temporal labeling is then simply
represented as a set of subintervals---the labels' \emph{activity
intervals}, during which the labels are displayed, where no two
conflicting labels are displayed simultaneously.  Depending on the
objective and consistency constraints of the labeling
model, different sets of subintervals may be chosen by the algorithm.

This is a very versatile framework, which includes, for instance, map
labeling for car navigation systems, in which the map view changes
position, angle, and scale according to the car's position, heading,
and speed following a particular route. To give another, seemingly
different example, it also includes the problem of labeling a set of
moving entities in a map view (e.g., for tracking vehicles in
logistics or planes in air traffic control).  Also non-map related
applications such as labeling 3D scenes as they occur in medical
information systems are covered by our model.  Put differently, the
model comprises any application in which start and end times of label
presences and conflicts can be determined in advance.  Further, the
conflicts are not restricted to label-label conflicts but may also
include label-object conflicts.

\paragraph{Our Results}
In a companion paper~\cite{Gemsa2013} we investigated the
  underlying models from a theoretical point of view, showing NP-hardness and W[1]-hardness
  for optimization problems in these models. We further
  provided optimal integer linear programming approaches and 
  approximation algorithms, 
  but without any experimental results.
  
  In this paper, we build upon our previous work, present more sophisticated heuristic algorithms, 
  and provide an extensive experimental evaluation of our
  proposed temporal labeling models and algorithms in a case study for navigation systems.
  Our experiments illustrate the usefulness of our models for this
  application, and further show the strength of each algorithm
  under each model. Ultimately, our experiments show that that simple but fast
  algorithms achieve near-optimal solutions for the optimization problems---which is very encouraging, given the hardness results.
  Lastly, while the models in~\cite{Gemsa2013} were developed specifically for
  navigation systems, we adapt the models to make them more broadly applicable to temporal map labeling scenarios.

\subsection{Related Work}
We now systematically review prior
research on label placement in maps focusing on dynamic map labeling.

In 2003, Petzold et al.~\cite{Petzold2003}
presented a framework for automatically placing labels on dynamic
maps. They split the label placement procedure into two phases, namely
a (possibly time-consuming) pre-processing phase and a query phase which
computes the labeling of custom-scale maps.  However, this approach does not
guarantee that labels do not \emph{jump} or \emph{flicker} while
transforming the map.

In 2006, Been et al.~\cite{Been2006} introduced the first formal
model for dynamic maps and dynamic labels, formulating a general
optimization problem.
They described the change of a map by
the operations \emph{zooming}, \emph{panning}, and \emph{rotation}. In order to avoid \emph{flickering} and \emph{jumping} labels
while transforming the map with zooming and panning, they required four desiderata for
\emph{consistent} dynamic map labeling. 
These comprise \emph{monotonicity}, that labels should not vanish when zooming in or appear when zooming out (or any of the two when panning), \emph{invariant point placement}, where label positions and size remain invariant during movement, and \emph{history independence}---placement and selection of labels should be a function of the current map state only. Monotonicity was modeled as selecting for each label at most one scale interval, the so-called \emph{active range}, during which the label is displayed.
They introduced the
\emph{active range optimization problem} (ARO) maximizing the sum of
active ranges over all labels such that no two labels overlap and all desiderata are fulfilled. They proved that ARO is NP-hard for
star-shaped labels and presented an optimal greedy algorithm for a
simplified variant. 

That model was the point of departure for several subsequent papers
considering the operations \emph{zooming}, \emph{panning} and
 \emph{rotation},  mostly independently. Been et al.~\cite{Been2010} took
a closer look at different variants of ARO for zooming. They
showed NP-hardness 
and gave approximation
algorithms. In the same manner further variants were investigated by
Liao et al.~\cite{Liao2014}. Gemsa et al.~\cite{Gemsa2011}
presented a fully polynomial-time approximation scheme (FPTAS) for a
special case of ARO, where the given map is one-dimensional and only
zooming is allowed. However, they combined the selection problem with a placement problem in a slider model.
Zhang et al.~\cite{Zhang2015} also considered the model of Been et
al.~\cite{Been2006} for zooming, however, instead of maximizing the
total sum of active ranges, they maximized the minimum active
range among all labels. They discussed similar variants as Liao et
al.~\cite{Liao2014} and Been et al.~\cite{Been2010}, also proving
NP-hardness and giving approximation algorithms.

Gemsa et al.~\cite{gnr-clrm-16,gnr-elsrm-16} extended the ARO
model to \emph{rotation} operations. They first showed
that the ARO problem is  NP-hard in that setting and introduced an
efficient polynomial-time-approximation scheme (EPTAS) for unit-height
rectangles~\cite{gnr-clrm-16}. In a second step they experimentally
evaluated heuristics, algorithms with approximation guarantees, and
optimal approaches based on integer linear
programming~\cite{gnr-elsrm-16}.  A similar setting for rotating maps was
considered by Yokosuka and Imai~\cite{Yokosuka2013}. Instead
of ARO, they aimed at finding the maximum font
size for which all labels can always be displayed without
overlapping.

Apart from the results based on the consistency model of Been et
al.~\cite{Been2006}, other approaches have been considered, too. Maass
et al.~\cite{Maass2006} described a view management system for
interactive three-dimensional maps of cities also considering label
placement. Mote~\cite{Mote2007} presented a fast label placement
strategy without a pre-processing phase.
Luboschik~\cite{Luboschik2008} described a fast particle-based strategy
that locally optimizes the label placement. All these approaches have in
common that they do not take the consistency criteria for dynamic map labeling into account.

A different generalization of static point labeling is dynamic point
labeling. In this case not the map is being transformed, but the point set
changes by adding or removing points as well as by moving points
continuously.  Inspired by air-traffic control, De Berg and
Gerrits~\cite{DeBerg2013} considered moving points on a static map that
all must be labeled. They presented a sophisticated heuristic for
finding a reasonable trade-off between label speed and label
overlap. Finally, Buchin and Gerrits~\cite{Buchin2014} showed that
dynamic point labeling is strongly PSPACE-complete.

\section{Model}
\begin{figure}
  \centering
 \includegraphics[page=1]{./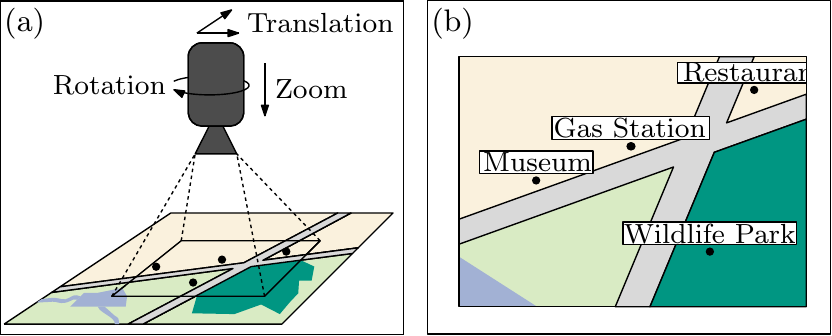}
 \caption{\small Model.~(a) Overall view of the scene.
   Depending on the rotation, translation, and zoom, the camera shoots
   a restricted part of the scene. The objects to be labeled are
   represented by black dots.~(b) The corresponding viewport of the camera. The labels are placed near their objects.}
 \label{fig:model:scene}
\end{figure}

We now formally describe the temporal labeling model that we use
in the remainder of the paper.
It unifies and generalizes the models
presented by Been et al.~\cite{Been2006}, De Berg and
Gerrits~\cite{DeBerg2013} as well as the model that we
introduced in~\cite{Gemsa2013}. The notation is mainly adopted
from~\cite{Gemsa2013}.

\subsection{Basic Model}
We are given a set $O=\{o_1,\dots,o_n\}$ of objects in a scene over a
given time span $\mathcal T= [0,T]$.  Further, for each object~$o$ we
are given a label~$\ell$, e.g., text describing $o$. We denote the
set of labels by $L=\{\ell_1,\dots,\ell_n\}$, where $\ell_i$ is the
label of $o_i$. To quantify the importance of a label, we define for
each label~$\ell \in L$ a positive weight $w_\ell\in \mathbb R^+$.

We have a restricted view on the scene through a camera, i.e., the
objects are projected onto an infinite plane $P$ such that we can only
see a restricted section~$V$ of~$P$, where $V$ models the
\emph{viewport} of the camera; see Fig.~\ref{fig:model:scene} for an
example.
During the time interval~$\mathcal T$, the
objects are moving and the camera changes its perspective by changing
its position, direction and zoom. We denote the plane $P$ and the
viewport $V$ at time $t$ by $P(t)$ and $V(t)$, respectively. Depending
on the position of the object $o_i\in O$, each label $\ell_i$ has a
certain shape and position on $P(t)$ at time $t$; we denote the
geometric shape of $\ell$ at time $t$ by $\ell(t)$. Following
typical map labeling models we may assume that $\ell(t)$ is a (closed) rectangle
enclosing the text; one may also assume other shapes.  In the
following we introduce some further notations to describe the setting precisely.

According to the perspective and position of the
camera, not every label $\ell(t)$ is contained in the viewport at
time~$t$. We say that a label $\ell$ is \emph{present} at time~$t$ if
$\ell(t)$ is (partly) contained in $V(t)$; that is, $\ell(t)\cap V(t)\neq
\emptyset$. We assume that the time intervals, during which a label~$\ell$ is
present, are given by a set $\presence_\ell$ of disjoint, closed
sub-intervals of $\mathcal T$; see Fig.~\ref{fig:justified} and Fig.~\ref{fig:activity}.
For such
an interval $[a,b] \in \presence_\ell$ we also write $[a,b]_\ell$
indicating that it belongs to~$\ell$. We denote the union of all those
sets $\presence_\ell$ by $\presence$ and assume that $\presence$ is a
multi-set, as it may contain the same interval $[a,b]$ multiple
times, where each occurrence of $[a,b]$ belongs to a different label.

Two labels $\ell$ and $\ell'$ are in \emph{conflict} at time $t\in
\mathcal T$, if the geometric shapes of both labels intersect, i.e.,
$\ell(t)\cap \ell'(t) \neq \emptyset$.
Following~\cite{Gemsa2013} we describe the occurrences of conflicts
between two labels~$\ell,\ell'\in L$ by a set of closed
intervals:~$\conflict_{\ell,\ell'}=\{[a,b]\subseteq \mathcal T\mid
[a,b]$ is maximal and~$\ell$ and $\ell'$ are in conflict at all
$t\in[a,b]\}$. For such an interval $[a,b]\in\conflict_{\ell,\ell'}$
we also write $[a,b]_{\ell,\ell'}$ indicating that it is a
\emph{conflict interval} between $\ell$ and $\ell'$. We denote the set
of all conflict intervals over all pairs of labels by the multi-set
$\conflict$.

To avoid overlaps between labels, we display a label~$\ell$ only at
certain times when no other displayed label overlaps $\ell$; the label
$\ell$ is said to be \emph{active} at those times. We describe the activity
of~$\ell$, by a set~$\activity_\ell$ of disjoint
intervals\footnote{Technically, one needs to distinguish between open and closed intervals, i.e., for closed rectangular labels, the presence and conflict intervals are closed but the activity intervals are open. However, including or excluding the interval boundaries makes no difference in our algorithms and hence we decided to simply use the notation $[a,b]$ for all respective intervals unless stated otherwise.}. For such an interval $[a,b]\in \activity_\ell$ we also
write $[a,b]_\ell$ to indicate that the \emph{activity interval}
belongs to $\ell$. The union of all activity intervals over all labels
is denoted by the multi-set $\activity$.

We say that two activity intervals $[a,b]_\ell$ and $[c,d]_{\ell'}$ of
two labels $\ell$ and $\ell'$ are in \emph{conflict} if there is a
time $t$ in the intersection of the open intervals $(a,b) \cap (c,d)$ such that the labels $\ell$ and $\ell'$ are in conflict at $t$.

An instance of temporal labeling is then defined by the set $L$ of
labels, the set $\presence$ of presence intervals and the set $\conflict$ of
conflict intervals. We thus completely abstract away the geometry of the problem, while all essential information of the temporal labeling instance is captured combinatorially in~$\presence$ and~$\conflict$.
In this paper, we primarily focus on conflict-free label selection, and therefore assume that~$\presence$ and $\conflict$ are given as input. However, in Section~\ref{sec:workflow} we describe how to construct~$\presence$ and~$\conflict$ for the specific application of navigation systems.

\begin{figure}[t]
 \centering
 \includegraphics[page=5]{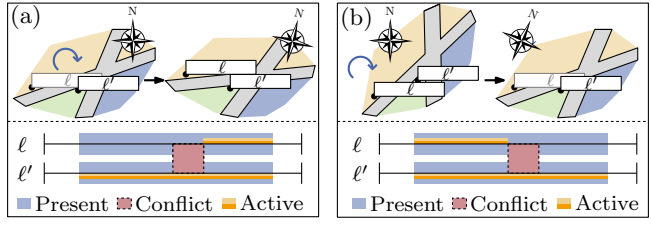}
 \caption{\small Label activity. The maps rotate clockwise, while the labels keep aligned horizontally. Black labels are active, while gray labels are
   inactive. The intervals illustrate presence, conflict and activity
   intervals.  The witness label $\ell'$ justifies (a) the start (b)
   the end of $\ell$'s activity interval.  }
 \label{fig:justified}
\end{figure}

Similarly to Been et al.~\cite{Been2006} for a temporal labeling we require the following temporal
consistency criteria:
\begin{compactenum}[(C1)]
\item A label should not be set active and inactive repeatedly to avoid \emph{flickering}.
\item The position and size of a label should be changed continuously, it should not \emph{jump}.
\item Labels should not overlap.
\end{compactenum}

We formalize those consistency criteria and say the the activity set $\activity$ is \emph{valid} (see Fig.~\ref{fig:activity}) if
\begin{compactenum}[(R1)]
\item for each activity interval $I_\ell \in \activity$ there is a presence interval $I'_\ell \in \presence$ with $I_\ell \subseteq I'_\ell$,\label{req:active-impl-present}
\item for each presence interval~$I_\ell \in \presence$ there is at most one activity interval $I'_\ell \in \activity$ with $I'_\ell\subseteq I_\ell$, \label{req:one-active-interval}and
\item no two activity intervals of $\activity$ are in conflict.  \label{req:overlap-free}
\end{compactenum}
    Requirement (R1) enforces that a label is only
displayed if it is present in the viewport. Requirement (R2)
  prevents a label from \emph{flickering} during a presence interval (C1),
  while (R3) enforces that no two displayed labels overlap (C3). In fact,
  (R2) is only a minimum requirement for avoiding flickering labels,
  which we later extend to stronger variants. By assuming that
labels' positions are fixed relative to their anchors, labels may not \emph{jump} (C2) as long as labeled objects are either fixed or move continuously.
From now on we assume that an activity set is valid, unless we
state otherwise.

\subsection{Optimization Problem}

Based on the introduced model we investigate two optimization problems
for temporal labeling that aim to maximize the overall active time
of labels. The first problem allows for any number of
labels to be active at the same time, and the second allows at most
$k$ labels to be active at the same time, which reduces the amount of
presented information.
We define the weight of an activity interval $[a, b]_\ell \in \activity$ to be $w([a, b]_\ell) = (b-a) \cdot w_\ell$.

\begin{problem}[\textsc{GeneralMaxTotal}]\

\begin{tabular}{ll}
\emph{\textbf{Given:}}& Instance~$(L,\presence,\conflict)$.  \\
\emph{\textbf{Find:}}& Activity set $\activity$ maximizing~
$\sum_{[a,b]_\ell\in \activity}w([a,b]_\ell)$.\\
\end{tabular}
\end{problem}

Figure~\ref{fig:example-labeling:gmt} shows an example of a single frame of a temporal labeling that is
  optimal with respect to \GeneralMaxTotal. While such a labeling is
  acceptable for general applications such as spatial data exploration,
  for small-screen devices, such as car navigation systems, the same labeling
  may overwhelm or distract the user with too much
  additional information. In fact, psychological studies have shown
  that untrained users are strongly limited in receiving, processing,
  and remembering information (e.g., see \cite{Miller1956}). For
  applications that do not receive a user's full attention it is therefore desirable to restrict the number of
  simultaneously displayed labels, which we formalize as an
  alternative optimization problem as follows.

\begin{problem}[$k$-\textsc{RestrictedMaxTotal}]\

\begin{tabular}{ll}
\hspace{-2ex}\emph{\textbf{Given:}}& Instance~$(L,\presence,\conflict)$, $k\in \mathbb N$.  \\
\hspace{-2ex}\emph{\textbf{Find:}}& Activity set $\activity$ maximizing~
$\sum_{[a,b]_\ell\in \activity}w([a,b]_\ell)$,
 s.t.\ at any time $t$ at most $k$ labels\\ & are active.
\end{tabular}
\end{problem}

In~\cite{Gemsa2013} we showed that \GeneralMaxTotal is NP-hard
and W[1]-hard. By W[1]-hardness, we cannot expect
algorithms for \kRestrictedMaxTotal that are fixed-pa\-ra\-me\-ter
tractable on $k$. This even applies for the example of navigation systems that we consider in Sect.~\ref{sec:transformation}. We therefore focus on heuristics for the two problems.

\begin{figure*}[t]
 \centering
 \includegraphics[page=8]{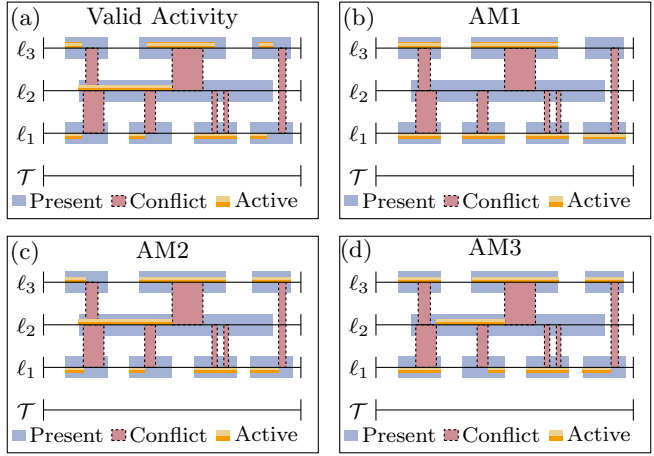}
 \caption{\small Valid activities and the activity models AM1, AM2 and AM3.}
 \label{fig:activity}
\end{figure*}
\subsection{Activity Models}

So far labels may become active or inactive within the viewport
without any external influence, see, e.g., the second activity interval of $\ell_3$ in Fig.~\ref{fig:activity}(a). Hence, the activity behavior of labels, even in an optimal solution $\activity$, is not
necessarily explainable to a user by simple and direct observations such as ``the label
becomes inactive at time~$t$, because at that moment an overlap starts
with another active label''. The absence of those simple logical
explanations may lead to unnecessary irritations of the user.  To
account for that we introduce the concept of justified activity intervals.

Consider a label $\ell$ with activity interval $[a,b]_\ell\in
\activity$. We say that the start of $[a,b]_\ell$ is
\emph{justified} if $\ell$ enters the viewport at time $a$ or if there is
a \emph{witness} label $\ell'$ such that a conflict of $\ell$ and
$\ell'$ ends at~$a$ and $\ell'$ is active at~$a$; see Fig.~\ref{fig:justified}(a).

Analogously, we say that the end of $[a,b]_\ell$ is \emph{justified}
if $\ell$ leaves the viewport at time $b$ or if there is a witness label
$\ell'$ such that a conflict of $\ell$ and $\ell'$ begins at $b$ and
$\ell'$ is active at $b$; see Fig.~\ref{fig:justified}(b). If both the
start and end of $[a,b]_\ell$ are justified, then $[a,b]_\ell$ is
justified.

Following our preceding paper~\cite{Gemsa2013}, we distinguish the
three activity models AM1, AM2, and AM3 that consider justified
activity intervals; see Fig.~\ref{fig:activity}.  While for AM1, a label may only become active and
inactive when it enters and leaves the viewport, for AM2 it may also
become inactive before leaving the viewport if a witness label justifies this event. AM3 further
allows a label to become active after entering the viewport if a witness label justifies that event.

\textbf{AM1.} An activity $\activity$ satisfies AM1 if
any activity interval $[a,b]_\ell \in \activity$ is justified and there is a presence
interval $[c,d]_\ell \in \presence$ of the same label $\ell$ with $[a,b]_\ell =[c,d]_\ell$.

\textbf{AM2.}  An activity $\activity$ satisfies AM2 if
any activity interval $[a,b]_\ell \in \activity$ is justified and there is a presence
interval $[c,d]_\ell \in \presence$ of the same label $\ell$ with $a=c$.

\textbf{AM3.}  An activity $\activity$ satisfies AM3 if
any activity interval $[a,b]_\ell \in \activity$ is justified.

We have described only the core of the model. Depending on the application it can be
easily extended to more complex variants, e.g., requiring minimum
 activity times.

\section{Workflow}\label{sec:workflow}
In this section we describe a simple but flexible workflow for temporal
labeling problems. This workflow consists of two phases. In the first
phase a concrete geometric labeling problem is transformed into an abstract
temporal labeling instance $I=(L,\presence,\conflict)$. This step critically depends on the concrete geometric model of the
given temporal labeling problem. Here, we consider the application
of a car navigation system; other labeling problems, such as labeling moving
entities, can be handled similarly. In the second phase, either
\GeneralMaxTotal or \kRestrictedMaxTotal is solved for the output instance $I$ from the first phase. 
We now describe these two phases in greater detail.

\subsection{Phase 1 -- Transformation into Intervals}\label{sec:transformation}
This phase depends on the specific labeling problem given. It
transforms the input for a particular geometric setting into a temporal labeling instance that can then be handled independently from the geometry.

\begin{figure}[t]
    \centering
    \includegraphics[page=6]{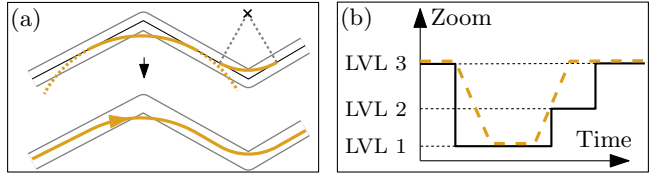}
    \caption{\small Trajectory.~(a) The corners of the selected route (black
      polyline) are smoothed by circular arcs obtaining a continuous
      and differentiable trajectory (orange curve). (b) The black
      rectilinear curve shows the zoom levels assigned to the
      underlying roads over time. The orange dashed line illustrates
      the interpolated actual zooming of the viewport. }
    \label{fig:trajectory}
 
\end{figure}

\paragraph{Example: Navigation Systems} For our experiments
we consider the use case of car navigation systems. In this use case
the viewport of the map moves along a selected route to the
journey's destination such that the camera is perpendicular to the
map; that is, the user of the navigation system observes the map in aerial perspective such that at any time the viewport has a certain position, rotation, and scale. See Fig.~\ref{fig:example_trajectory}.

As in~\cite{Gemsa2013}, we model the
viewport as an arbitrarily oriented rectangle~$R$ that
defines the currently visible part of the map on the screen.  The
viewport follows a trajectory that we model as a continuous
differentiable function $\tau\colon \mathcal T \to \R^2$.

In our setting, we obtain $\tau$ from a polyline describing the
selected route by \emph{smoothing} the polyline's corners by circular
arcs; see Fig.~\ref{fig:trajectory}(a). Thus, $\tau$ is described by a
sequence of line segments and circular arcs.

The viewport is described by a function~$V\colon
\mathcal T\to \mathbb{R}^2\times [0,2\pi]\times[0,1]$. The interpretation of
$V(t)=(c(t),\alpha(t),z(t))$ is that at time~$t$ the center of~$R$ is located
at~$c(t)$,~$R$ is rotated clockwise by the angle~$\alpha(t)$ relatively to a
north base line of the map, and $R$ is scaled by the factor $z(t)$. We
call~$z(t)$ the \emph{zoom} of $V$ at time~$t$.
Since~$R$ moves along~$\tau$ we define $c(t) = \tau(t)$.
To avoid
distracting changes of the map, we assume that the viewport does not
both rotate and zoom at the same time. More precisely, we are given a
finite set~$\mathcal Z$ of zoom levels at which the viewport is
allowed to rotate. Hence, when the camera zooms, the
trajectory must form a straight line for that particular period of time.

The objects in $O$ describe points of interests and are fixed on the
map. We model a label $\ell$ of an object $o\in O$ as a rectangle on
the plane~$P$ that is anchored at the projection of~$o$ onto $P$ with
the midpoint of its bottom side. It does not change its size on the
screen over time.  To ensure good readability, the
labels are always aligned with the viewport axes as the viewport
changes its orientation (i.e., they rotate around their anchors by the
same angle~$\alpha(t)$); see Fig.~\ref{fig:example_trajectory}.

For each label we compute the time events when it enters or leaves the
viewport, and when it starts and stops overlapping another label. Since
rotation and zooming are temporally separated, those operations can be
considered independently.  Computing the time events for
rotations requires an intricate geometric analysis, which is described
in~\cite{n-cldmust-12}. For changing from one zoom level to another, we
do not allow instantaneous changing of zoom levels, but instead we
linearly interpolate the scale of the map between both zoom levels,
as in Fig.~\ref{fig:trajectory}(b). (In our experiments we further enforce a
minimum duration between two changes of zoom levels to avoid
oscillation effects; see Sect.~\ref{sec:evaluation}.)
Under these conditions, time events for zooming can be computed by detecting
collisions among linearly moving objects.

The computed time events directly translate into presence and conflict
intervals of the labels. Hence, 
we obtain the  temporal labeling instance $I=(L,\presence,\conflict)$.

\begin{figure}[t]
  \centering
 \centering
 \includegraphics[page=3]{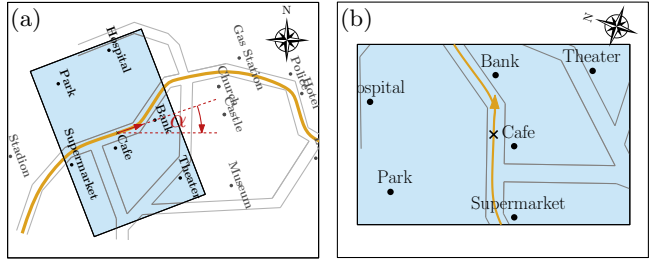}
 \caption{\small Trajectory-based labeling.~(a) Viewport moves and aligns
   along a given trajectory (fat orange line). Labels align to the
   viewport.~(b) The user's view on the scene. }
 \label{fig:example_trajectory}
\end{figure}

\paragraph{Other Scenarios} Our model is not restricted to labels
of point features, but it also can be applied to labels of other features
such as line and area features. For example, one could pre-compute a
label placement for roads and combine the road labels with labels for point features by computing all temporal conflict events. Thus, we again
obtain a temporal labeling instance~$I=(L,\presence,\conflict)$
describing the setting. By pre-selecting active intervals for certain
labels, we can further enforce that they are definitely active at
the selected times. In the same manner we can ensure that labels do not
overlap certain important map features. Finally, we do not require the labeled
objects to be fixed, but they may also move. As long as the start and
end times of label presence- and conflict intervals can be determined in
advance, they can be represented in our model. Depending on the
 setting, this may involve non-trivial geometrical
computations, but once the transformation is done, the different
scenarios are treated~equally.

\subsection{Phase 2 -- Resolving Conflicts}\label{sec:phase2}
In the second phase we compute the activity intervals for all labels.  We present optimal approaches as well as efficient heuristics for solving \GeneralMaxTotal and \kRestrictedMaxTotal on
$I=(L,\presence,\conflict)$.

\subsubsection{Integer Linear Programming}
In order to provide upper bounds for the evaluation of our
labeling algorithms, we implement an integer linear programming (ILP)
model that solves \GeneralMaxTotal and \kRestrictedMaxTotal
optimally. We introduced this ILP formulation in~\cite{Gemsa2013}
and refrain from repeating it here. The key idea is to split the
given intervals into disjoint \emph{elementary intervals}. Those
intervals are then optimally combined for the solution such that the
specific constraints of \GeneralMaxTotal and \kRestrictedMaxTotal are
satisfied with respect to the chosen activity model.
Finding an optimal solution for an ILP model is NP-hard in
general. However, it turns out that in practice we can apply
specialized solvers to find optimal solutions for reasonably sized
instances in acceptable time; see Sect.~\ref{sec:evaluation} for details. Hence,
this ILP-based method provides a simple and generic way to produce
optimal solutions. We call this approach~\ILP.

\subsubsection{Approaches Based on Conflict Graphs}
We reduce \GeneralMaxTotal to an independent set problem on a weighted conflict
graph $G=(V,E)$ such that the maximum weight independent set in $G$
induces the optimal solution of~$I$. Since AM3 is the most general
model we first describe the reduction for this variant and the sketch
adaptations for AM1 and AM2.

Let $[a,b]_\ell \in \presence$ be a presence interval of the
label~$\ell\in L$. If~$\ell$ becomes active within $[a,b]_\ell$, then
this happens either at time $a$ or at the end of one of the conflict
intervals of $[a,b]_\ell$. Let $s_1,\dots,s_h$ denote those times.
Analogously, if $\ell$ becomes inactive in $[a,b]_\ell$, then
this happens either at time $b$ or at the beginning of one of the
conflict intervals of $[a,b]_\ell$. Let $t_1,\dots,t_h$ denote those
times.

Hence, if~$\ell$ is active for an interval $[s,t]_\ell\subseteq[a,b]_\ell$,
then there are $s_i$ and $t_j$ with $s_i\leq t_j$ such that
$[s,t]_\ell=[s_i,t_j]_\ell$. We call $[s_i,t_j]_\ell$ a \emph{candidate}.

We construct the graph $G_\text{AM3}=(V,E)$ as follows. For any
presence interval $[a,b]_\ell$ of any label~$\ell$ we introduce a
vertex for any candidate $[s_i,t_j]_\ell$ of $[a,b]_\ell$; we identify
the vertices with their candidates and assign to each vertex the
weight of the candidate. For two candidates $u$ and $v$ of the same
presence interval we introduce the edge $\{u,v\}$. Thus, the
candidates of the same presence interval form a clique~$C$ in~$G$,
which we call a \emph{cluster}. For two candidates of different
presence intervals $[a,b]_\ell$ and $[c,d]_{\ell'}$ we introduce an
edge if and only if $[a,b]_\ell$ and $[c,d]_{\ell'}$ are in conflict
during the intersection of both candidates; we say that the corresponding
candidates are in conflict.

Conceptually, to construct $G_\text{AM2}$, we remove each candidate
from $G_{\text{AM3}}$ that does not start at the beginning of its
presence interval. Further removing each candidate that does not end at the
end of its presence interval gives use graph~$G_\text{AM1}$. Note, however,
that in our implementation we constructed $G_\text{AM1}$ and $G_\text{AM2}$
directly without $G_\text{AM3}$.

Then an independent set~$\mathcal I$ of $G_{\text{AM3}}$ is precisely
a set of candidates that are not in conflict.
We interpret $\mathcal I$ as an activity set of the given instance.
We call $\mathcal I$  \emph{saturated}, if there are no two candidates
$v\in \mathcal I$ and $v'\in V\setminus \mathcal I$ such that
$\mathcal I'=\mathcal I\cup\{v'\}\setminus\{v\}$ is an independent
set, $v'$ and $v$ belong to the same cluster and $w(\mathcal
I) < w(\mathcal I')$, where $w(\mathcal I)=\sum_{u\in \mathcal I}w(u)$. Note that any maximum weight
independent set of $G$ is also saturated.

\begin{lemma}
  Let $\mathcal I$ be a saturated independent set of~$G_{\text{AM}X}$, then
  $\mathcal I$ is a valid activity set of the instance $I$ with respect to AM$X$ where $X\in \{1,2,3\}$.
\end{lemma}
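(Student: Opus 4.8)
The plan is to verify each of the three validity requirements (R1)--(R3) directly from the construction of $G_{\text{AM}X}$ and the definition of saturation, handling AM3 first (the most general case) and then noting that AM1 and AM2 follow because their candidate sets are subsets of AM3's. The key observation underlying everything is that, by construction, each candidate $[s_i,t_j]_\ell$ is a subinterval of some presence interval $[a,b]_\ell$, and two candidates are joined by an edge precisely when they belong to the same cluster (same presence interval) or when they are in conflict during their overlap. An independent set therefore contains no two candidates from the same cluster and no two conflicting candidates.

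First I would establish (R1): every candidate is of the form $[s_i,t_j]_\ell \subseteq [a,b]_\ell$ with $[a,b]_\ell \in \presence$, so each activity interval in $\mathcal I$ is contained in a presence interval by definition of a candidate. Next, for (R2), I would use the cluster structure: the candidates arising from a single presence interval $[a,b]_\ell$ form a clique (cluster), so an independent set $\mathcal I$ can contain at most one candidate per cluster, i.e.\ at most one activity interval contained in each presence interval. For (R3), I would invoke that any two candidates of different presence intervals that are in conflict are joined by an edge; hence no two elements of the independent set $\mathcal I$ are in conflict, which is exactly the requirement that no two activity intervals of $\mathcal I$ are in conflict.

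The step I expect to require the most care is verifying that the chosen activity intervals are \emph{justified}, which is not captured by independence alone but is where \emph{saturation} enters. For AM3, each candidate $[s_i,t_j]_\ell$ has endpoints chosen from the set $\{a\} \cup \{\text{conflict-interval ends}\}$ and $\{b\} \cup \{\text{conflict-interval starts}\}$, so the start is either the viewport-entry time $a$ or the end of some conflict; in the latter case I must argue that the witness label $\ell'$ generating that conflict is itself active at time $s_i$, and symmetrically for the end $t_j$. This is precisely where a non-justified choice could in principle survive: if the witness $\ell'$ is \emph{not} active, then removing the constraining conflict would allow the candidate to be replaced by a strictly longer candidate of the same cluster that is still independent of $\mathcal I$ and has larger weight, contradicting saturation. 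I would therefore argue by contradiction: assume some endpoint of a candidate $v \in \mathcal I$ is unjustified, produce the longer replacement candidate $v'$ in the same cluster with $\mathcal I' = \mathcal I \cup \{v'\} \setminus \{v\}$ still independent and $w(\mathcal I) < w(\mathcal I')$, and contradict saturation.

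Finally, I would dispatch AM1 and AM2 by the subset remark already made in the construction: since $G_{\text{AM2}}$ is obtained from $G_{\text{AM3}}$ by deleting candidates not starting at the beginning of their presence interval (and $G_{\text{AM1}}$ by further deleting those not ending at the end), a saturated independent set of $G_{\text{AM}X}$ consists of candidates already satisfying the structural restriction $a=c$ (for AM2) or $[a,b]_\ell = [c,d]_\ell$ (for AM1), while requirements (R1)--(R3) and the justification argument carry over verbatim because the edge set and weights are inherited. Thus $\mathcal I$ is a valid activity set with respect to AM$X$ for each $X \in \{1,2,3\}$.
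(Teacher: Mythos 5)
Your proposal is correct and follows essentially the same route as the paper's proof: (R1)--(R3) are read off directly from the candidate/cluster/edge construction, and justification is established by contradiction, replacing an unjustified candidate with a longer candidate from the same cluster that remains independent, which violates saturation. The only cosmetic difference is that you spell out how AM1 and AM2 follow via the subgraph relationship, whereas the paper simply asserts that analogous arguments apply.
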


\begin{proof}
  We  prove the lemma only for AM3; similar arguments apply for the
  other two models.  Consider the labeling that we obtain by setting
  the labels' activities according to $\mathcal I$. By
  construction of the candidates, each activity interval in
  $\mathcal I$ is contained in a corresponding presence
  interval (R1). By construction of the clusters each label is set active at
  most once for each presence interval (R2). Further, no two
  labels overlap, because candidates in conflict mutually exclude each
  other in any independent set of~$G_{\text{AM}3}$ (R3).

  We now prove that $\mathcal I$ satisfies AM3 by contradiction. We consider two cases. In the first case there
  is a label $\ell$ that is active during a presence interval
  $[a,b]_\ell$ such that $\ell$ becomes active at time $s$ with $a<s$
  and there is no witness label $\ell'$ such that a common conflict
  ends at $s$. By construction there is an interval $[s,t]_\ell$ in $\mathcal I$ for some $t$.
  Since $a<s$ there is a further candidate $[s',t]_\ell$
  with $s'<s$. Further, we can choose $s'$ such that $[s',t]_\ell$ is
  not in conflict with any candidate of $\mathcal
  I\setminus\{[s,t]_\ell\}$. Hence, $\mathcal I'=\mathcal
  I\cup\{[s',t]_\ell\}\setminus\{[s,t]_\ell\}$ is an independent set
  of $G_{\text{AM}3}$ such that $w(\mathcal I') > w(\mathcal
  I)$. Consequently, $\mathcal I$ is not a saturated independent set,
  which contradicts the assumption.

  In the second case there is a label $\ell$ that is active during a
  presence interval $[a,b]_\ell$ such that $\ell$ becomes inactive at
  time $t<b$ and there is no witness label $\ell'$ such that a common
  conflict begins at $t$. Analogous to the first case, we can show that this implies that $\mathcal I$ is not saturated.
\end{proof}

We use different general heuristics for computing independent sets on
$G_{\text{AM}X}$ for $X\in\{1,2,3\}$. However, those independent sets
are not necessarily saturated so that they do not necessarily 
satisfy the according activity model.  Thus, in a post-processing step, we
check whether the activity $\mathcal I$
satisfies AM$X$. If this is not the case, then there is a
cluster with two vertices $v\in \mathcal I$ and $v'\not\in \mathcal I$
such that $\mathcal I'=\mathcal I\cup\{v'\}\setminus\{v\}$ is an
independent set, and $w(\mathcal I) < w(\mathcal
I')$. We exchange $v$ with~$v'$ and repeat the procedure
until $\mathcal I$ is saturated.

We use the following heuristics for computing an independent set
$\mathcal I$ on $G_{\text{AM}X}$.

\GreedySearch. We first consider \GeneralMaxTotal. Starting with an empty solution~$\mathcal I$, the
algorithm removes the candidate~$c$ with largest weight from $G_{\text{AM}X}$ and adds it to $\mathcal I$. Then, it removes
all candidates from $G_{\text{AM}X}$ that are in conflict with $c$. We
repeat this procedure until all candidates are removed from~$G_{\text{AM}X}$. Since we always take the candidate with largest weight, $\mathcal I$ is saturated.

\begin{figure}[tb]
  \centering
  \includegraphics[page=7]{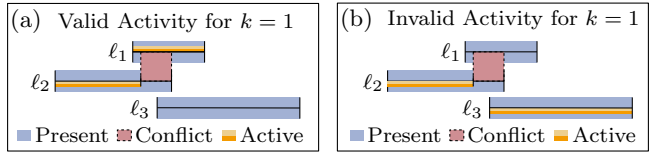}
  \caption{\small Activity of labels for \kRestrictedMaxTotal in AM2 and AM3 for $k=1$. (a) The optimal solution. (b) A solution wrongly produced on $G_\text{AM2}$, $G_\text{AM3}$, respectively. For example, \GreedySearch first adds the presence interval of $\ell_3$ to the solution $\mathcal I$. Then it adds the prefix of $\ell_2$'s presence interval~$P$ that ends at the beginning of the conflict with $\ell_1$. It cannot add the whole presence of $\ell_2$, because otherwise more than one label is active at the same time. For the same reason it cannot add any part of $\ell_1$'s presence interval to the solution $\mathcal I$. Hence, the end of~$P$ is not justified (only $\ell_1$ could justify that end). Hence, $\mathcal I$ is not valid. }
\label{apx:fig:invalid-example}
\end{figure}

In order to solve~\kRestrictedMaxTotal for AM1, we create the
graph $G_{\text{AM1}}$ and apply the procedure as described
above. However, this time we remove not only all candidates that are
in conflict with the candidate $c$, but also any candidate that cannot
be added to~$\mathcal I$ without violating the requirement that at
most $k$ labels are active at the same time. The resulting activity
set $\mathcal I$ is then valid with respect to AM1.  For AM2
  and AM3 we cannot apply the same procedure on $G_\text{AM2}$ and
  $G_\text{AM3}$, respectively, without potentially violating the
  requirement of label witnesses; see also Fig.~\ref{apx:fig:invalid-example}. In our evaluation we therefore use
  the solutions of AM1 instead, which trivially satisfy AM2 and AM3.

\ifFull
\begin{algorithm2e}[!tb]
\DontPrintSemicolon
\SetCommentSty{}
\caption{Phase}
\textbf{input} number of iterations $i$, an independent set $\mathcal I$
\For{$i$ iterations}{
\While{$C_0(\mathcal I) \neq \emptyset$ \text{or} $C_1(\mathcal I)\setminus U \neq \emptyset$}{
 \While{$C_0(\mathcal I)\setminus U \neq \emptyset$}{
    ${\mathcal I} \leftarrow {\mathcal I} \cup \text{Select}(C_0)$\;
    \text{SaveIfBestSoFar($\mathcal I$)}\;
    $U \leftarrow \emptyset$
 }
 \If{$C_1(\mathcal I)\setminus U \neq \emptyset$}{
    $\{v\} \leftarrow \text{Select}(C_1(\mathcal I)\setminus U)$\;
    $\{u\} \leftarrow {\mathcal I} \setminus N(v)$\;
    ${\mathcal I} \leftarrow ({\mathcal I}\setminus \{u\}) \cup \{v\}$\;
    $U \leftarrow U \cup \{u\}$
 }
}
\text{Perturb($\mathcal I$)}
}
\end{algorithm2e}
\fi

\LocalSearch.
\ifFull
To quickly find high-quality solutions for~\GeneralMaxTotal, we further investigated local search algorithms for finding a large-weight independent set in $G_{\text{AM}X}$. Surprisingly, few results exist in the literature for the weighted version of the problem. However, Phased Local Search (\LocalSearchShort)~\cite{Pullan2006}, originally developed for the maximum clique problem, has also been shown to find maximum or near-maximum weight independent sets on weighted versions of standard benchmark graphs~\cite{Pullan2009214}. \LocalSearchShort maintains a current independent set $\mathcal I$, and sets $C_0(\mathcal I)$ and $C_1(\mathcal I)$ of vertices with $0$ neighbors and $1$ neighbor in $\mathcal I$, respectively, from which vertices are selected to be included in $\mathcal I$. An iteration of \LocalSearchShort consists of repeated \emph{improvements}, where a vertex from $C_0(\mathcal I)$ into $\mathcal I$ until $C_0(\mathcal I)$ is empty, and then a \emph{plateau search}, which forces a vertex from $C_1(\mathcal I)$ into $\mathcal I$. To avoid repeatedly selecting the same vertices, vertices forced into $\mathcal I$ are no longer considered for selection in the current iteration. Vertices are further assigned \emph{penalties}, which make them less likely to be selected in future iterations of the algorithm. After each iteration, a vertex is selected at random to provide for large discontinuities in search.

PLS proceeds in three phases of vertex selection: (1) a \emph{random selection} phase randomly selects vertices to insert into $\mathcal I$; (2) a \emph{penalty selection} phase randomly selects from vertices with the lowest class; and (3) a \emph{greedy selection} phase randomly selects from vertices with the lowest degree. While other local search algorithms are more efficient for the unweighted maximum independent set problem~\cite{Jin201520}, they are not as amenable to the weighted version of this problem as they apply operations that serve only to expand the size of the independent set, but which may also decrease the weight.
\else
As a further method to find high-quality solutions for~\GeneralMaxTotal, we investigated local search algorithms for finding a large-weight independent set in the conflict graph $G_{\text{AM}X}$. As far as we are aware, the Phased Local Search algorithm (\LocalSearchShort) by Pullan~\cite{Pullan2006}, originally developed for the maximum (unweighted) clique problem, is the only local search algorithm that has been shown to find maximum or near-maximum independent sets on weighted versions of standard benchmark graphs~\cite{Pullan2009214}. Other local search algorithms~\cite{Jin201520} may give higher quality solutions for the unweighted case, but they apply operations that serve only to expand the cardinality of the independent set, which may decrease its weight during the process.

An iteration of PLS consists of repeated \emph{improvements}, which add a vertex to a current independent set $\mathcal I$ until it is maximal, followed by a \emph{plateau search}, which swaps a vertex in $\mathcal I$ for one that has one neighbor in $\mathcal I$. When no improvement or swap can be made, $\mathcal I$ is perturbed to include a random vertex. To ensure sufficient diversity of solutions, vertices which are in $\mathcal I$ at the end of an iteration are \emph{penalized}, making them less likely to be considered in future iterations. Vertices recover from penalties by a \emph{penalty decrease} mechanism, where penalties are reduced according to a dynamically updated penalty delay parameter. See~\cite{Pullan2006} for further~details.

PLS proceeds in three phases, each of which performs iterations using one of three specified vertex selection criteria for choosing an improvement/swap among available candidates, uniformly at random: (1) a \emph{random selection} phase, which selects from all available candidates; (2) a \emph{penalty selection} phase, which selects from candidates with the lowest penalty; and (3) a \emph{greedy selection} phase, which selects from candidates with the lowest degree. The standard PLS algorithm performs $50$ iterations of greedy selection, followed by $100$ iterations of penalty selection, and 50 iterations of greedy selection, until a stopping criteria is met.
\fi

\subsubsection{Approach Based on Interval Graphs}
The set of presence intervals $\presence$ induces an \emph{interval
  graph}~$H$. In this graph the presence intervals form the vertex set and two vertices are connected by an edge if and only if the
corresponding intervals intersect. We identify the vertices with the
intervals. In particular each vertex has the weight of its presence
interval. The next approach makes use of $H$ to compute the activity set $\activity$.

\IntervalSearch. We first consider \GeneralMaxTotal and repeatedly apply the following procedure on~$H$
until all vertices are removed from $H$.  We compute a maximum-weight
independent set~$\mathcal I$ on $H$, which can be done in linear
time for interval graphs~\cite{Hsiao1992}. We remove those vertices from $H$ and add the
intervals to the solution~$\activity$. In case of AM1, we remove also
any neighbor of those vertices from $H$. For AM2, we do
not remove those neighbors, but rather shorten the
according presence intervals to the longest prefixes that are not in
conflict with any presence interval of $\mathcal I$.
 For AM3, we
shorten any presence interval of the neighbors to the longest prefix,
infix or suffix that is not in conflict with any presence interval of
$\mathcal I$.
Vertices with empty intervals are removed.
By design, the activity set $\activity$ is valid according to the
applied activity model.

When solving \kRestrictedMaxTotal, 
we abort the procedure after the $k$-th iteration. Since each iteration
computes a set of pairwise disjoint intervals, in the computed activity set $\activity$ at most $k$ labels are active at any time.

\section{Experiments}\label{sec:evaluation}

In this section we present the experimental evaluation of the
different models and algorithms for temporal map labeling considering
the application of navigation systems\footnote{All source code and data instances are freely available at~\url{http://i11www.iti.kit.edu/temporallabeling/}.}. To that end we computed a set
of 204 trajectories on the city map of Berlin, which are between $1$km
and $49$km long, with an average length of $20$km. We measure the
\emph{complexity} of the instances by their input size
$|\presence|+|\activity|$, which varies between $5$ and $10756$ and has an average of $1870$.  We
focused on a city map, because the density of the recorded points of
interest (POIs) in cities is significantly higher (and thus more
challenging) than in the countryside. We obtained the POIs from
OpenStreetMap\footnote{\url{http://www.openstreetmap.org}} (OSM) data.  In order to
assess on the usefulness of our approach we modeled the choice of
parameters as realistically as possible. However, the setting is
an example and can also be specified differently.

\subsection{Data and Experimental Setup}
The trajectories for our experiments were generated from random
shortest path queries on the OSM road network of Berlin. Each
trajectory is composed of a set of circular arcs and line segments as
described in Sect.~\ref{sec:transformation}. The viewport of the
camera is $800$ pixels wide and $600$ pixels high. Its speed and zoom
when moving along the trajectory is determined by the specified speed
limit of the underlying road. For each speed limit we introduce a zoom
level such that it takes at least $60$ seconds for a point to leave at
the bottom side of the viewport after entering the viewport on the top
side. This improves the legibility of labels moving through the
viewport. The change between two zoom levels is done by continuously
applying linear interpolation changing the zooming in reasonable time.
We took all POIs
which are tagged in OSM as \emph{fuel stations, parking lots, ATMs, restaurants, caf\'es, hotels, motels} and \emph{tourist information} as well as labels for \emph{countries, cities} and \emph{villages} --- a set we deemed suitable for car navigation systems. We further assigned a weight of 1 to every label and used the font
\emph{Helvetica} in point size 14 for rendering. It is enforced that any active
range of a label lasts at least one seconds to avoid flickering
labels. More sophisticated approaches comprising minimum visible area
of labels and minimum time between two active phases could be
incorporated easily. In this evaluation, however, we focus on the core
of our model.

All algorithms were implemented in C++ and compiled with GCC 4.8.3. ILPs were solved by Gurobi 6.0. All experiments were performed on an AMD Opteron 6172 processor clocked at 2.1 GHz, with 256 GB of RAM. Gurobi was allowed to use up to four cores in parallel, while all other experiments were run on a single core. Since we focus on \emph{Phase 2} in this
paper, we used an easy-to-implement approach for \emph{Phase 1} by
sampling the trajectory with high resolution. Much faster, but more
laborious approaches can be applied in practice as described in
Sect.~\ref{sec:transformation}. The evaluation of those approaches is
beyond the scope of this paper.

\subsection{Evaluation}

\begin{figure}[tb]
 \centering
  \includegraphics[page=1]{./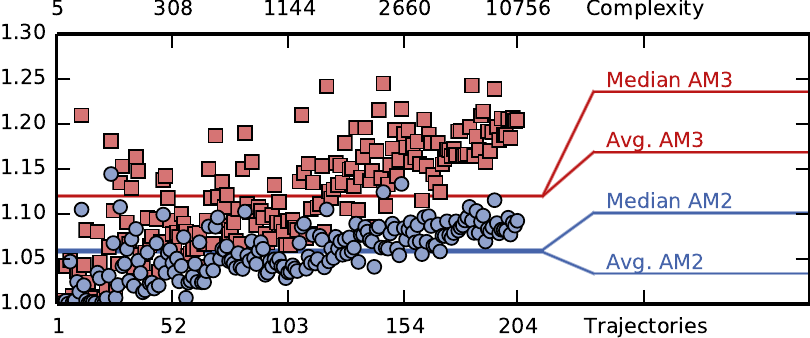}

 \caption{\small \GeneralMaxTotal, Comparison of Activity Models. Each data point represents an instance solved by
   \ILP. X-Axis:
   Instances are sorted by their complexity ($|\presence|+|\conflict|$) in increasing order. Y-Axis: Ratio between the optimal solution of AM2 (blue disks) or AM3 (red squares) and the optimal solution for AM1. }
 \label{plot:activity-models}
\end{figure}

For each trajectory we ran the different algorithms of Sect.~\ref{sec:phase2} for \GeneralMaxTotal and \kRestrictedMaxTotal (with $k=5$ and $k=10$) in the activity models AM1, AM2, and AM3. Any run exceeding
the time limit of $600$ seconds was aborted. Similarly when the graph
$G_{\text{AM}X}$ exceeded $10^7$ edges or vertices the run was  aborted to avoid memory overflow.
While for \IntervalSearch all runs were processed, the other
approaches did not complete all runs. Both \GreedySearch and
\LocalSearch completed about $92\%$ of the runs for \GeneralMaxTotal with
AM3; for the remaining $8\%$, the graph exceeded the aforementioned size limits. For all other model variants, all runs were completed. Further, \ILP sometimes exceeded its time limit for \kRestrictedMaxTotal with AM2 and AM3: for AM2, \ILP completed about $99\%$ and for AM3 about $66\%$ of the runs. For all other model variants, \ILP completed all its runs.

On each instance, we ran \LocalSearch $10$ times and report the average solution size. Each run was made with a different random seed and a time limit of $0.1$ seconds. We chose $0.1$ seconds, since we observed that \LocalSearch plateaus on nearly all instances after this time. Even with a $100$-fold increase to a time limit of $10$ seconds, we did not see significant improvement over the solution quality given after $0.1$ seconds (see Fig.~\ref{apx:plot:gmt:quality} in the appendix).

We now present extensive comparisons between the activity models,
optimization problems \GeneralMaxTotal and \kRestrictedMaxTotal, and
the applied algorithms.

\paragraph{Activity Models} We compare the activity models AM1, AM2,
and AM3 with each other by opposing the optimal solutions obtained by
\ILP. Figure~\ref{plot:activity-models} shows the ratio between the
solution for AM2 (AM3) and the solution for AM1 for \GeneralMaxTotal. By definition, a solution for AM1 is a lower bound for AM2, which again is a lower bound for AM3.
The activity is increased by a factor of 1.06 (1.12) on average for
AM2 (AM3).  Further, for \GeneralMaxTotal the ratio increases with increasing
complexity of the instances. Hence, for \GeneralMaxTotal the activity
models AM2 and AM3 increase the amount of displayed information
moderately. For general applications such as map
exploration this improvement is potentially helpful for the user.

In contrast, for 5-\textsc{RestrictedMaxTotal} the activity is
only increased by a factor of 1.02 (1.04) on average for AM2
(AM3);  see Fig.~\ref{apx:plot:activity-models} in the appendix. For 10-\textsc{Re\-strict\-ed\-Max\-Total} we obtain a factor of 1.03
(1.06) on average for AM2 (AM3). For both optimization problems this ratio decreases with the increasing complexity of the instances. Hence, for
\kRestrictedMaxTotal the activity models AM2 and AM3 increase the displayed amount of information only slightly, while producing more potentially
distracting visual effects by changing the labels' activities during their visibility in the viewport. Keeping in mind that \kRestrictedMaxTotal is
targeted for small screen devices such as smartphones and navigation systems, the measured gain of additional information does not necessarily justify the
additional visual distractions. Hence, AM2 and AM3 are less relevant in the context of \kRestrictedMaxTotal.

\begin{figure*}[tb]
\centering
\subfloat[Quality]{ \label{plot:gmt:quality}\includegraphics[page=1]{./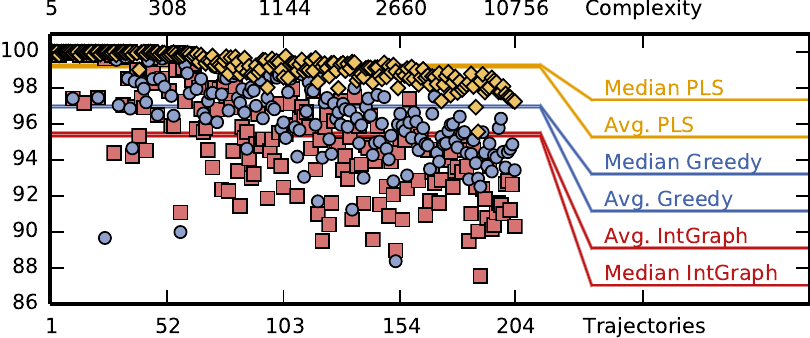}}
\hfill
\subfloat[Running Time]{ \label{plot:gmt:time}\includegraphics[page=1]{./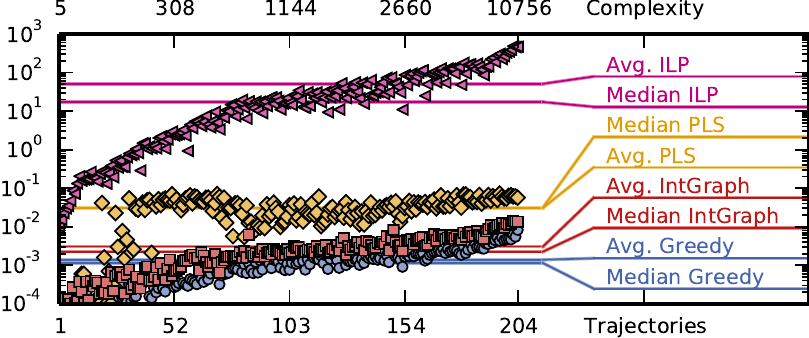}}
 \caption{\small \GeneralMaxTotal, AM1. Each data point represents an instance solved by
   \IntervalSearch (red square),  \GreedySearch (blue disk) or \LocalSearch (PLS) (yellow diamond). X-Axis:
   Instances are sorted by their complexity in increasing order. Y-Axis: (a) Achieved percentage of the optimal
   ILP solution. (b) Running time in seconds (logarithmic scale).}

\end{figure*}

\paragraph{Algorithms for \GeneralMaxTotal} 
Next, we compare the proposed algorithms with respect to \GeneralMaxTotal and AM1.
Figure~\ref{plot:gmt:quality} shows the activity obtained by single runs
in relation to the optimal solution obtained by \ILP.  In case
that \ILP exceeded the time limit, we used the upper bound that has
been found so far by \ILP as reference. If such an upper-bound has not
been found by \ILP, the run is omitted in the plot. Figure~\ref{plot:gmt:time} shows the running times, again with aborted runs omitted.

Concerning quality, \LocalSearch outperforms the
two other algorithms. No run achieved less than $95\%$ of the optimal
solution, while for \GreedySearch $23\%$ and for \IntervalSearch
$45\%$ of the runs achieved less than $95\%$ of the optimal
solution. On average \LocalSearch achieved $99\%$ of the optimal
solution, while \GreedySearch achieved $97\%$ and \IntervalSearch
achieved $96\%$ of the optimal solution.
Concerning average running times, \LocalSearch ($0.03$ sec.) is slower by about one order of magnitude compared to \GreedySearch ($0.001$ sec.) and \IntervalSearch ($0.003$ sec.). The running times of \ILP (average $51$ sec.) stayed far behind.

For AM2 and AM3 \LocalSearch is no longer the leader\footnote{The time for \LocalSearch to perform a single iteration depends upon the degree of vertices in the current independent set. Graphs for AM2 and AM3 have much higher vertex degrees than for AM1, which explains why \LocalSearch performs so poorly on these instances.} and \IntervalSearch
outperforms the other algorithms; see Fig.~\ref{apx:plot:gmt} in
the appendix. For AM2 both the average and median (about $89\%$) stay
behind the average and median of \GreedySearch (about $93\%$) and \IntervalSearch
(about $95\%$). For AM3 this gap is even more pronounced ($85\%$
vs. $90\%$ and $94\%$, respectively). Further, the quality of \LocalSearch is strongly
dispersed (minimum $67\%$).  For both activity models AM2 and AM3
\GreedySearch and \IntervalSearch yield similar results concerning
quality. However, concerning running time \IntervalSearch clearly
beats the other approaches and, unlike the other two
approaches, completed every run.

\begin{figure}[t]
\centering
\subfloat[\GeneralMaxTotal]{\label{fig:example-labeling:gmt} \includegraphics[page=1,width=0.48\linewidth]{./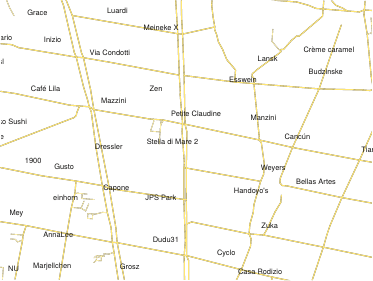}}
\hfill
\subfloat[10-\textsc{RestrictedMaxTotal}]{\label{fig:example-labeling:k10} \includegraphics[page=1,width=0.48\linewidth]{./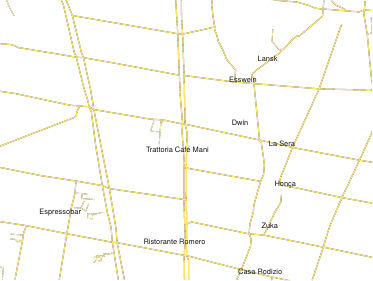}}
 \caption{\small Frame of a dynamic map labeling. While in (a) 54 labels are displayed at the same time, in (b) 10 labels are displayed in order to limit the informational content.}
 \label{fig:example-labeling}
\end{figure}

\paragraph{Optimization Models}
We now compare \GeneralMaxTotal with \kRestrictedMaxTotal.
For each trajectory and each integer $n < |L|$ we determined the proportion of the trajectory for which at least $n$ labels are active.
For \GeneralMaxTotal and AM1 we obtained the following results
(similar results hold for AM2 and AM3). On average for over $50\%$ of
the trajectory's length more than~$3$ labels are active at the same
time. However, for over~$25\%$ ($12.5\%$) of the trajectory's length
more than $8$ ($12$) labels are active at the same time, which already
may overwhelm untrained observers~\cite{Miller1956}. Further, for
$67\%$ ($42\%$) of the instances there are times when more than $20$
($40$) labels are active.  In some extreme cases over $60$ labels are
active at the same time. Figure~\ref{fig:example-labeling:gmt} shows a frame of a dynamic map labeling with $54$
active labels.  We
observe that in such cases the labels occupy a significant part of the
viewport and thus may occlude many other important map features.
For the
application of navigation systems and maps on smartphones it therefore lends itself to limit
the number of simultaneously active labels, which motivates
the relevance of \kRestrictedMaxTotal. Figure~\ref{fig:example-labeling:k10} shows the same
frame with a labeling produced by \ILP for $10$-\textsc{RestrictedMaxTotal}.

\paragraph{Algorithms for \kRestrictedMaxTotal}
Finally, we discuss the performance of the  algorithms for \kRestrictedMaxTotal with
$k=5$. Similar results hold for the case $k=10$; see
Fig.~\ref{apx:plot:k10} in the appendix.  Recall that \LocalSearch
does not support this optimization problem.  Figure~\ref{plot:k5} shows
the quality ratios and running times for
$5$-\textsc{Re\-stricted\-Max\-Total} in AM1; see Fig.~\ref{apx:plot:k5} in the appendix for AM2 and AM3.  \IntervalSearch
outperforms \GreedySearch both concerning quality and running time.
It achieves more than $99\%$ of the optimal solution on average. Further, every
run achieves at least $95\%$ of the optimal
solution. In contrast, \GreedySearch
achieves $96\%$ of the optimal solution on average. Further, $27\%$ of
the runs reach less than $95\%$ of the optimal solution, but at least
$89\%$.  While \IntervalSearch does not exceed a running time of $0.01$ seconds, \GreedySearch needs up to $0.1$ seconds. On average,
\IntervalSearch took $0.002$ seconds and \GreedySearch took $0.01$
seconds. Again, the running times of \ILP with an average of $54$ seconds stayed far behind.

\begin{figure*}[tb]
 \centering
  \subfloat[Quality]{\label{plot:k5:quality}\includegraphics[page=1]{./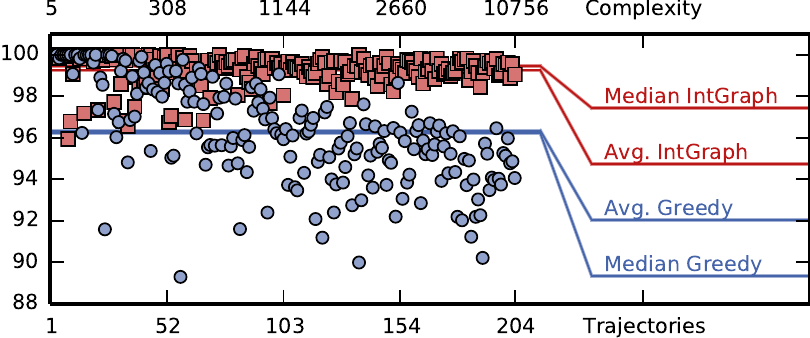}}
\hfill
  \subfloat[Running Time]{\label{plot:k5:time}\includegraphics[page=1]{./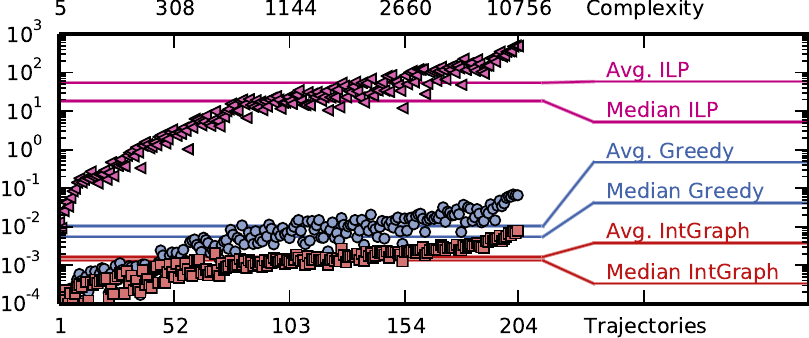}}
 \caption{\small $5$-\textsc{RestrictedMaxTotal}, AM1. Each data point represents an instance solved by
   \IntervalSearch (red square) or \GreedySearch (blue disk). X-Axis:
   Instances are sorted by their complexity ($|\presence|+|\conflict|$) in increasing order. Y-Axis:  (a) Achieved percentage of the optimal
   ILP solution. (b) Running time in seconds (logarithmic scale).}
 \label{plot:k5}
 \end{figure*}
\subsection{Discussion}

In our evaluation we considered both the temporal
labeling models and several labeling algorithms. From our comparison of the
three activity models we conclude that AM1, the most restricted model that does
not modify a label's activity during its presence interval and thus fully
avoids flickering, is not much worse in terms of the total activity. In fact,
the quality difference depends on the optimization problem: In
\GeneralMaxTotal the average improvement of AM2 is $6\%$ and of AM3 it is
$12\%$. For \kRestrictedMaxTotal and $k=10$ the average improvement of AM2 and AM3 is only
$3\%$ and $6\%$, respectively. Whether the gain in displayed content of AM2 and
AM3 outweighs the additional flickering effects would need to be examined in a
formal user study. Our evaluation of the models further shows that, without
placing any restrictions on the number of simultaneously active labels in
\GeneralMaxTotal, we frequently observe instances with relatively high numbers of labels,
which is not acceptable in certain applications---thus justifying the 
\kRestrictedMaxTotal model.

Our comparison of the algorithms showed that different algorithms are
preferable in different situations. For \GeneralMaxTotal and AM1
\LocalSearch outperformed all other algorithms in terms of
solution quality while being an order of magnitude slower than \GreedySearch and \IntervalSearch. However, for AM2 or AM3 or
\kRestrictedMaxTotal, \IntervalSearch is a clear winner
in both performance measures. \GreedySearch also
performs generally well and can be used as an easy-to-imple\-ment
approach. \ILP provides a simple way to compute optimal solutions and
was mainly used to evaluate the other algorithms in terms of solution
quality. It could be used directly as a solution approach, but its
running time is not reliable and external libraries are needed; thus,
we think that the other approaches are preferable in practice.

\section{Conclusion}
We presented a versatile and flexible temporal map labeling model that unifies and generalizes
several preceding models for dynamic map labeling.
Its strength lies in its purely combinatorial
nature, which abstracts away the problem's geometry. Thus, it can be used in any scenario where the start and end
times of label presences and conflicts can be determined in advance.
In a detailed experimental evaluation, we discussed the advantages of
different model variants and showed that simple and fast
algorithms yield near-optimal solutions for the application of navigation systems.

To apply our approach to maps exceeding the size of
city maps, we suggest decomposing the conflict graph into smaller
components. It seems likely that, when taking countrysides into
account, the conflict graph either already consists of several independent
components or it contains small cuts that allow for an appropriate
decomposition. Further, since our approach relies on algorithms for computing large weighted independent sets in graphs, this is another research
direction that promises improvements to our approach.

We focused on the core of our model in order to discuss its
application in general. However, with some engineering it can easily be extended to other scenarios or enhanced by further features such as a minimum visible area of labels, different types of map features or labels
avoiding obstacles.

\clearpage
\appendix

\section{Additional Diagrams}

\label{sec:additional-plots}
\centering

\begin{figure}[h]
  \centering
 \subfloat[AM1]{\includegraphics[page=1]{./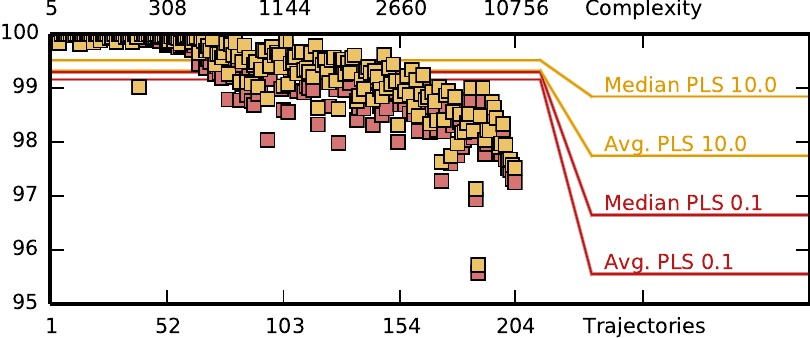}}

 \subfloat[AM2]{\includegraphics[page=1]{./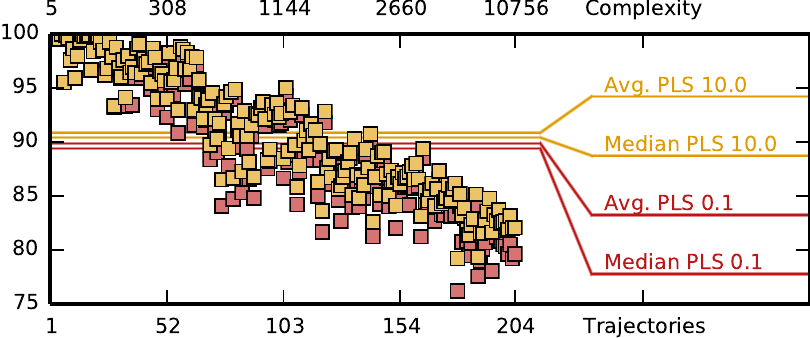}}

 \subfloat[AM3]{\includegraphics[page=1]{./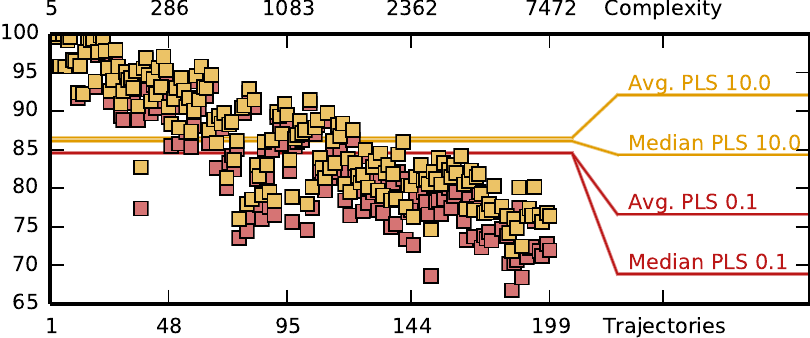}}
 \caption{\small \GeneralMaxTotal, Quality. Each data point represents an instance solved by
   \LocalSearch. The local search phase was aborted after 0.1 (red) and 10 (yellow) seconds. X-Axis:
   Instances are sorted by their complexity ($|\presence|+|\conflict|$) in increasing order. Y-Axis: Achieved percentage of the optimal
   ILP solution. }
 \label{apx:plot:gmt:quality}
\end{figure}

\begin{figure*}[tb]
  \centering
  \subfloat[\GeneralMaxTotal]{\includegraphics[page=1]{./plots/gmt_am.pdf}}

  \subfloat[10-\textsc{RestrictedMaxTotal}]{\includegraphics[page=1]{./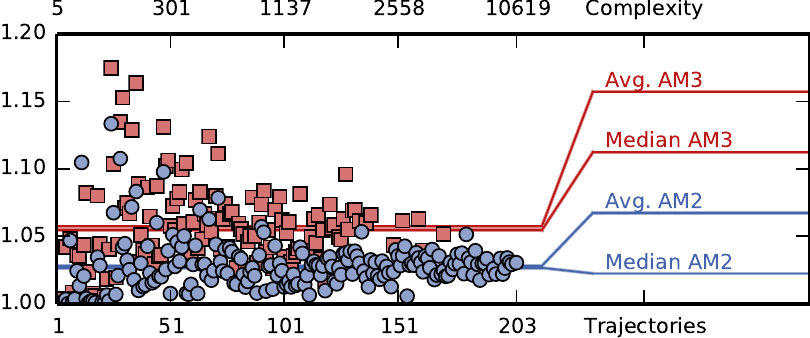}}

  \subfloat[5-\textsc{RestrictedMaxTotal}]{\includegraphics[page=1]{./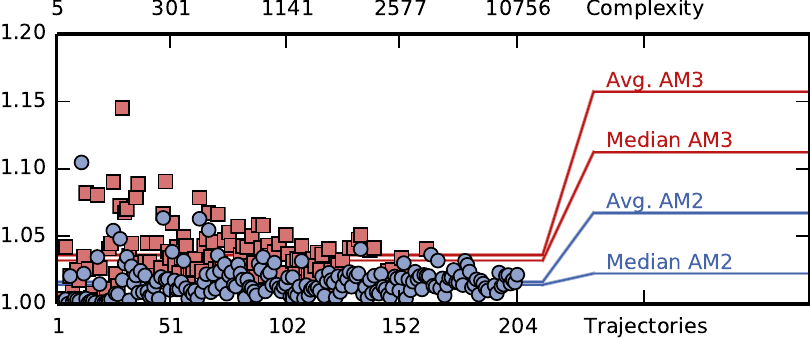}}

 \caption{\small Comparison of Activity Models. Each data point represents an instance solved by
   \ILP. X-Axis:
   Instances are sorted by their complexity ($|\presence|+|\conflict|$) in increasing order. Y-Axis: Ratio between the optimal solution of AM2 (AM3) and the optimal solution for AM1. }
 \label{apx:plot:activity-models}
\end{figure*}

\begin{landscape}
\begin{figure*}[tb]
 \centering
 \subfloat[Quality, AM1]{\includegraphics[page=1]{./plots/gmt_quality_am1.pdf}}\hspace{1cm}
 \subfloat[Running Time, AM1]{\includegraphics[page=1]{./plots/gmt_time_am1.pdf}}

 \subfloat[Quality, AM2]{\includegraphics[page=1]{./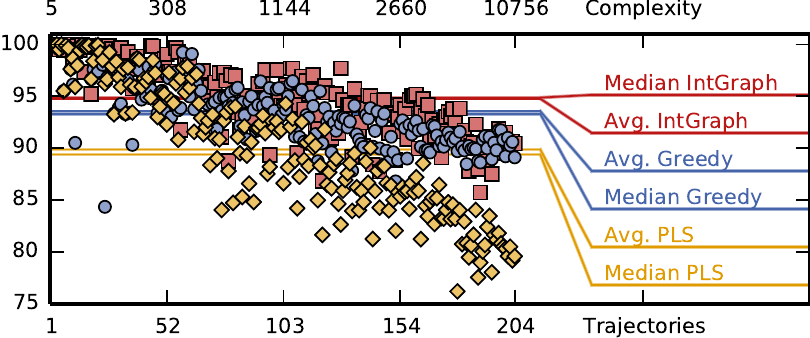}}\hspace{1cm}
 \subfloat[Running Time, AM2]{\includegraphics[page=1]{./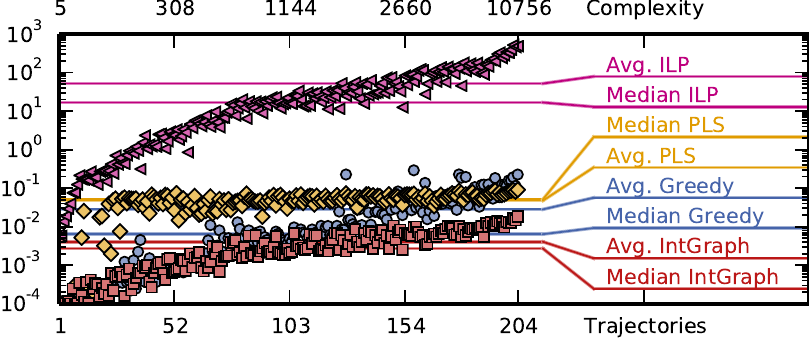}}

 \subfloat[Quality, AM3]{\includegraphics[page=1]{./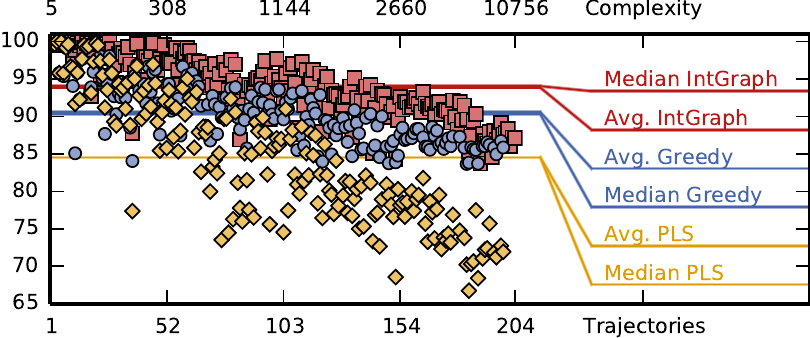}}\hspace{1cm}
 \subfloat[Running Time, AM3]{\includegraphics[page=1]{./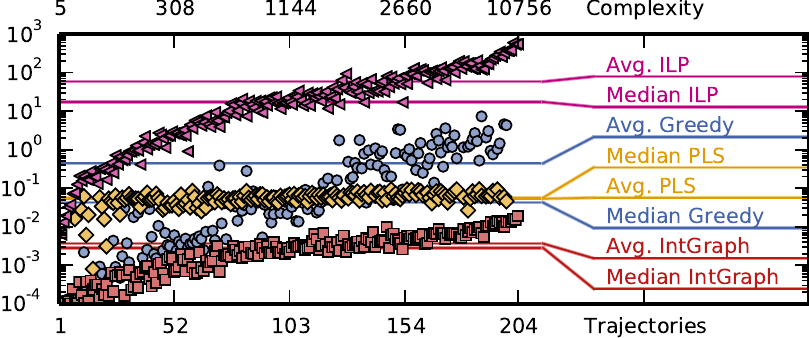}}

 \caption{\small\GeneralMaxTotal. Each data point represents an instance solved by
   \IntervalSearch (red square),  \GreedySearch (blue disk) or \LocalSearch (yellow diamond). X-Axis:
   Instances are sorted by their complexity ($|\presence|+|\conflict|$) in increasing order. Y-Axis: (a),(c),(e): Achieved percentage of the optimal
   ILP solution. (b),(d),(f): Running time in seconds.}
 \label{apx:plot:gmt}
\end{figure*}
\end{landscape}

\begin{landscape}
  \begin{figure*}[tb]
    \centering
    \subfloat[Quality,
    AM1]{\includegraphics[page=1]{./plots/k5_quality_am1.pdf}}\hspace{1cm}
    \subfloat[Running Time,
    AM1]{\includegraphics[page=1]{./plots/k5_time_am1.pdf}}

    \subfloat[Quality,
    AM2]{\includegraphics[page=1]{./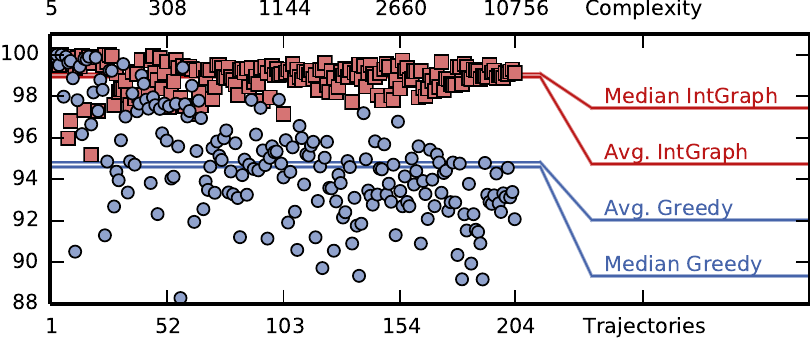}}\hspace{1cm}
    \subfloat[Running Time,
    AM2]{\includegraphics[page=1]{./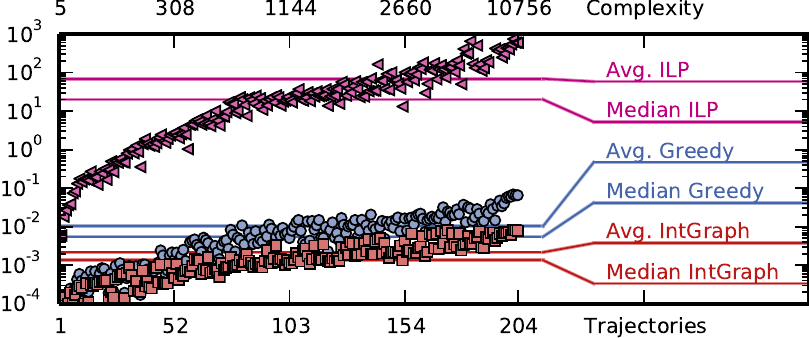}}

    \subfloat[Quality,
    AM3]{\includegraphics[page=1]{./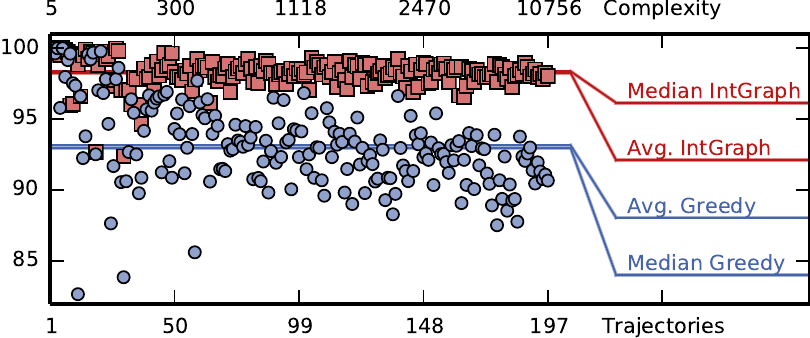}}\hspace{1cm}
    \subfloat[Running Time,
    AM3]{\includegraphics[page=1]{./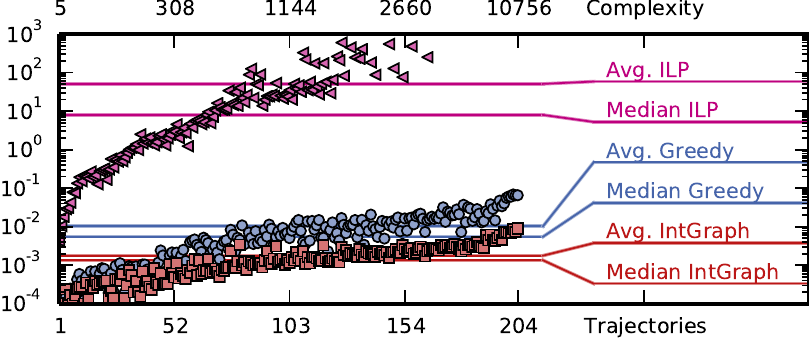}}

    \caption{\small $5$-\textsc{RestrictedMaxTotal}, Quality. Each
      data point represents an instance solved by \IntervalSearch (red
      square) or \GreedySearch (blue disk). X-Axis: Instances are
      sorted by their complexity ($|\presence|+|\conflict|$) in
      increasing order. Y-Axis: (a),(c),(e): Achieved percentage of
      the optimal ILP solution. (b),(d),(f): Running time in seconds.}
    \label{apx:plot:k5}
  \end{figure*}
\end{landscape}

\begin{landscape}
\begin{figure*}[tb]
 \centering 
 \subfloat[AM1]{\includegraphics[page=1]{./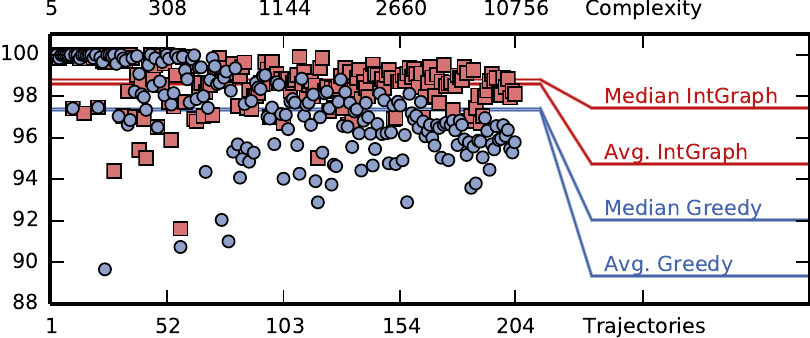}}\hspace{1cm}
 \subfloat[AM1]{\includegraphics[page=1]{./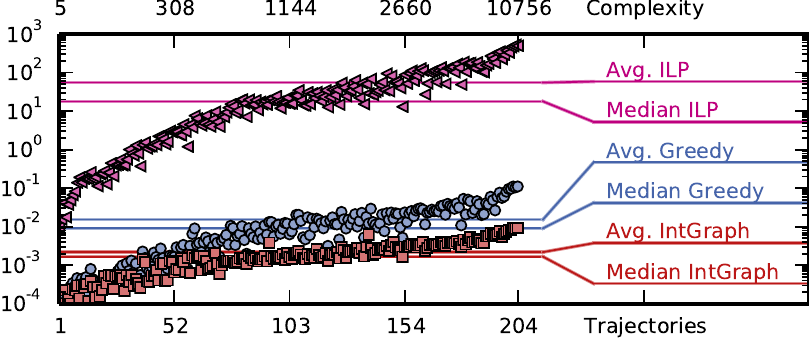}}

 \subfloat[AM2]{\includegraphics[page=1]{./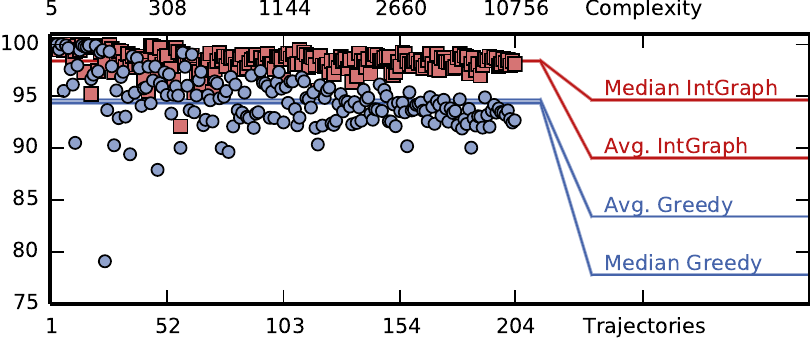}}\hspace{1cm}
  \subfloat[AM2]{\includegraphics[page=1]{./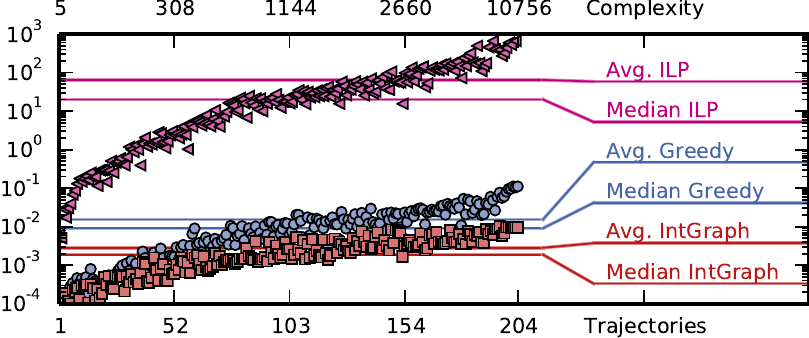}}

 \subfloat[AM3]{\includegraphics[page=1]{./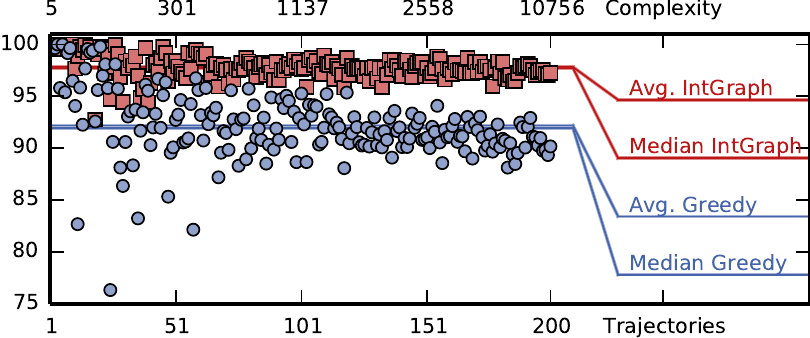}}\hspace{1cm}
 \subfloat[AM3]{\includegraphics[page=1]{./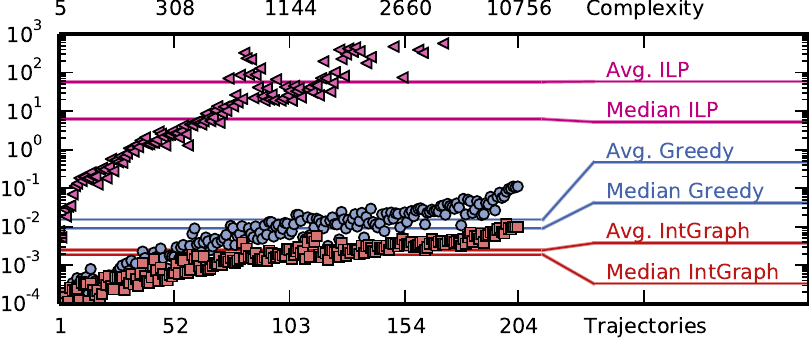}}

 \caption{\small$10$-\textsc{RestrictedMaxTotal}, Quality. Each data point represents an instance solved by
   \IntervalSearch (red square) or \GreedySearch (blue disk). X-Axis:
   Instances are sorted by their complexity ($|\presence|+|\conflict|$) in increasing order. Y-Axis: (a),(c),(e): Achieved percentage of the optimal
   ILP solution. (b),(d),(f): Running time in seconds.}
 \label{apx:plot:k10}
\end{figure*}
\end{landscape}

\end{document}